\newtheorem{theorem}{Theorem}
\newtheorem{proposition}{Proposition}
\newtheorem{lemma}{Lemma}
\begin{document}
\title{The power of the largest player}

\author{Sascha Kurz}

\address{Department of Mathematics, University of Bayreuth, 95440 Bayreuth, Germany}
\email{sascha.kurz@uni-bayreuth.de}

\date{}

\begin{abstract}
Decisions in a shareholder meeting or a legislative committee are often modeled as a weighted game. 
Influence of a member is then measured by a power index. A large variety of different indices has been 
introduced in the literature. This paper analyzes how power indices differ with respect to the largest 
possible power of a non-dictatorial player. It turns out that the considered set of power indices can be 
partitioned into two classes. This may serve as another indication which index to use in a given application.

\medskip

\noindent
\textbf{JEL classification:} C61, C71

\medskip

\noindent
\textbf{Keywords:} power measurement; weighted games
\end{abstract}

\maketitle


\section{Introduction}
\label{sec:intro}
\noindent
Consider a community association with four property owners having shares of $50\%$, $26\%$, $15\%$, and $9\%$, respectively.
Assume that decisions are of a simple {\lq\lq}yes{\rq\rq} or {\lq\lq}no{\rq\rq} nature and that the owners decide with a 
two-thirds majority rule. Such a decision environment can be modeled as a weighted game, where the players have non-negative weights 
$w_1,\dots,w_n$. Any subset $S$ of the players, called coalition,  
can adopt a proposal if and only if the sum of their weights $\sum_{i\in S}w_i$ meets or exceeds a given positive 
quota $q$. The collection $[q;w_1,\dots,w_n]$ is then called a weighted game, e.g., $[0.67;0.50,0.26,0.15,0.09]$ in our example. 
Note that those voting weights are often a poor proxy for players' influence. Whenever $S$ is a coalition including the third but 
excluding the fourth player and $T$ is the coalition obtained from exchanging player three by player four, then coalition $S$ can 
bring through a proposal if and only if coalition $T$ can do. So, the third and the fourth player are symmetric in terms 
of their influence on the decision, which is not reflected by the weights. 

The literature has thus introduced several more sophisticated ways of measuring a players' influence in weighted games. 
Unfortunately, different indices can lead to very different predictions. For our example 
we obtain relative power distributions of $(0.50,0.30,0.10,0.10)$, $\frac{1}{12}\cdot(7,3,1,1)$, $(1,0,0,0)$, or $(0.40,0.20,0.20,0.20)$ 
for the Penrose-Banzhaf index, the Shapley-Shubik index, the nucleolus, and the Public Good index, respectively. 

One way to decide which 
power index to choose for a given application is to employ one of the known axiomatizations, see e.g.\ \cite{dubey1975uniqueness}
and \cite{dubey1979mathematical}, and to check which axioms are satisfied. 
Here, we consider the power of the largest player, without full power. It will turn out that the possible values differ 
significantly for different power indices, which may also allow to exclude the 	suitability of certain power indices in a given 
application. Although this theoretical question is quite natural, 
it has not been treated in the 
literature so far.

Another application of our results stems from the so-called 
\emph{inverse power index problem}, see e.g.\ \cite{de2017inverse,koriyama2013optimal,kurz2012inverse, kurz2017democratic}.
It asks for a simple or weighted game $v$ such that the corresponding power distribution (according to a given power index $p$) 
meets a given ideal power distribution $\sigma$ as closely as possible. Since there is only a finite number of different 
weighted or simple games, it is obvious that some power vectors can not be approximated too closely if the number of voters is small. 
\cite{pre05681536} show that there are also vectors that are hard to approximate by the Penrose-Banzhaf index 
of a simple game if most of the mass of the vector is concentrated on a small number of coordinates. Generalizations and impossibility 
results for other power indices have been obtained in \cite{kurz2016inverse}. So, if we know that $p_i(v)=1$ or $p_i(v)\le\lambda$, 
for any simple game $v$, and $\sigma_i$ lies somewhere in the middle of the interval $[\lambda,1]$, then $p(v)$ has a significant distance to 
$\sigma$ provided that $\lambda$ is not 
close to $1$.
 
\section{Preliminaries} 
\label{sec_preliminaries}
\noindent
By $N=\{1,\dots,n\}$ we denote the set of players. A \emph{simple game} is a surjective and monotone mapping $v\colon 2^N\to\{0,1\}$ 
from the set of subsets of $N$ into a binary output $\{0,1\}$. \emph{Monotone} means $v(S)\le v(T)$ for all $\emptyset\subseteq S\subseteq T 
\subseteq N$. The values of this mapping can be interpreted as follows. For each subset $S$ of $N$, called \emph{coalition}, we have 
$v(S)=1$ if the members of $S$ can adopt a proposal even though the members of $N\backslash S$ are against it. If $v(S)=1$ we speak 
of a \emph{winning coalition} and a \emph{losing coalition} otherwise. A winning coalition $S$ is called \emph{minimal} if all of its 
proper subsets are losing. Similarly, a losing coalition $T$ is \emph{maximal} if all of its proper supersets are winning. 
A simple game $v$ is weighted if there exist weights $w_1,\dots,w_n\in\mathbb{R}_{\ge 0}$ and a quota 
$q\in\mathbb{R}_{>0}$ such that $v(S)=1$ exactly if $w(S):=\sum_{i\in S} w_i\ge q$. Two players $i$ and $j$ are called 
\emph{symmetric}, in a given simple game $v$, if $v(S\cup\{i\})=v(S\cup\{j\})$ for all $\emptyset\subseteq S\subseteq N\backslash\{i,j\}$. 
Player $i\in N$ is a \emph{null player} if $v(S)=v(S\cup\{i\})$ for all $\emptyset\subseteq S\subseteq N\backslash\{i\}$, i.e., player 
$i$ is not contained in any minimal winning coalition. A player that is contained in every minimal winning coalition is called a veto 
player. If $\{i\}$ is a winning coalition (note that $\emptyset$ is a losing coalition), then player~$i$ is a \emph{passer}. If additionally 
all other players are null players, then we call player $i$ a \emph{dictator}. 

A power index $p$ is a mapping from the set of simple (or weighted) games on $n$ players into $\mathbb{R}^n$. 
By $p_i(v)$ we denote the $i$th component of $p(v)$, i.e., the power of player~$i$.  As an example 
consider the 
\emph{Shapley-Shubik index}: 
$$
  \operatorname{SSI}_i(v)=\sum_{S\subseteq N\backslash\{i\}} \frac{|S|!\cdot(n-|S|-1)!}{n!}\cdot\left(v(S\cup\{i\})-v(S)\right).
$$
The list of power indices that have been proposed in the literature so far is long. In order to keep the paper compact 
and self-contained, we follow the proposed taxonomy of \cite{kurz2016inverse} 
and refer the reader, e.g., to that paper for more references and details. We call $p$ \emph{positive} if 
$p(v)\in\mathbb{R}_{\ge 0}^n\backslash\{0\}$ and \emph{efficient} if $\sum_{i=1}^n p_i(v)=1$ for all games $v$. For any positive power 
index $p$ we obtain an efficient version by $p_i(v)/\sum_{j=1}^n p_j(v)$. Applying this 
to $\sum_{S\subseteq N\backslash\{i\}}\left(v(S\cup\{i\})-v(S)\right)$ gives the \emph{Penrose-Banzhaf index} $\operatorname{BZI}$. We 
call a coalition $S\cup\{i\}$ \emph{critical} for $i$, if $v(S\cup\{i\})-v(S)=1$. Then player $i$ called \emph{critical}. Note that not 
all players of a critical coalition are critical. 

Instead of counting critical coalitions, we can also count the minimal winning coalitions 
containing a given player~$i$. Normalizing to an efficient version, as above, gives the \emph{Public Good index} $\operatorname{PGI}$.  
The so-called \emph{equal division counting function} gives each relevant player of a counted coalition 
the same share, so that they sum up to one. More concretely 
$$
  \sum_{ \{i\}\subseteq S\subseteq N\,:\, S\text{ is minimal winning}} \frac{1}{|S|} 
$$
gives the non-normalized version of the Deegan-Packel index $\operatorname{DP}$ for player $i$, i.e., it arises from the $\operatorname{PGI}$ 
by equal division. Equal sharing among the critical players of a coalition turns the Penrose-Banzhaf index into the Johnston index 
$\operatorname{Js}$. The 
definition of the \emph{nucleolus} $\operatorname{Nuc}$ is a bit more involved. For a simple game $v$ and a vector $x\in\mathbb{R}^n$ we call 
$e(S,x)=v(S)-x(S)$ the \emph{excess} of $S$ at $x$, where $x(S):=\sum_{i\in S}x_i$. It can be interpreted as quantifying the coalition's 
dissatisfaction and potential opposition to an agreement on allocation $x$. For any fixed $x$ let $S_1, \ldots, S_{2^n}$ be an ordering of 
all coalitions such that the excesses at $x$ are weakly decreasing, and denote these ordered excesses by 
$E(x)=\Big(e(S_k,x)\Big)_{k=1,\ldots, 2^n}$. Vector $x$ is \emph{lexicographically less} than vector $y$ if $E_k(x)<E_k(y)$ for the smallest
component $k$ with $E_k(x)\neq E_k(y)$. The \emph{nucleolus} $x^\star$ of $v$ is then uniquely defined as the lexicographically minimal
vector $x$ with $x(N)\le v(N)=1$, cf.~\cite{0191.49502}. For simple games we automatically have $x^\star\in\mathbb{R}_{\ge 0}^n$ 
and $x^\star(N)=1$. Several authors restrict the definition to imputations, where $x_i^\star\ge v(\{i\})$.

We call a power index $p$ 
\emph{symmetric} if $p_i(v)=p_j(v)$ for symmetric players $i,j$ in $v$. If $p_i(v)=0$ for every null player $i$ of $v$, then we say that 
$p$ satisfies the \emph{null player property}. The six power indices introduced so far are positive, efficient, symmetric, satisfy the 
null player property and are defined for all simple games. 

There are a few other power indices that are just defined for a weighted game $v$ and based on representations. For our initial example 
we have $$[0.67;0.50,0.26,0.15,0.09]=[5;3,2,1,1],$$ i.e., there can be several \emph{representations} of the same weighted game. We can obtain 
power indices for weighted games by averaging over all representations of a certain type. If we restrict to integer weights and quota with 
minimum possible weight sum $\sum_{i=1}^n w_i$, we obtain the \emph{minimum sum representation index} $\operatorname{MSRI}$. We may also average 
over all normalized weight vectors, i.e., over the polyhedron $P^{\text{w}}(v)=\big\{ w\in\mathbb{R}_{\ge 0}^n \,:\,\sum_{i=1}^n w_i=1, 
w(S)\ge w(T)\,\forall \text{ minimal winning $S$ and}$ $\text{all maximal losing $T$}\big\}$. With this the \emph{average weight index} is given by 
$$
  \operatorname{AWI}(v)=\frac{1}{\int_{P^{\text{w}}(v)} \operatorname{d}w}\cdot\left(\int_{P^{\text{w}}(v)} w_1\operatorname{d}w,
  \dots,\int_{P^{\text{w}}(v)} w_n\operatorname{d}w\right).
$$ 
Taking also the quota into account we can consider $P^{\text{r}}(v)=$
$$\left\{ (q,w)\in\mathbb{R}_{\ge 0}^{n+1} : \sum_{i=1}^n w_i=1, 
q\le 1, w(S)\ge q ,\forall \text{ min.win. $\!S$}, w(T)\le q\,\forall\text{ max.los. $\!T$}\right\}$$ and define the \emph{average 
representation index} as 
$$
  \operatorname{ARI}(v)=\frac{1}{\int_{P^{\text{r}}(v)} \operatorname{d}(q,w)}\cdot\left(\int_{P^{\text{r}}(v)} w_1\operatorname{d}(q,w),
  \dots,\int_{P^{\text{r}}(v)} w_n\operatorname{d}(q,w)\right).
$$
All those three representation based power indices are positive, efficient and symmetric. The null voter property is only satisfied for the 
$\operatorname{MSRI}$.
  
\section{Results}
\label{sec_main}
\noindent
For every positive, efficient power index that satisfies the null player property the power of a dictator is exactly one. 
In this case, we speak of full power. So, the largest possible power for a player is $1$ and it is quite natural to ask for the largest possible 
power of a player that is strictly less than $1$. Since the number of simple games is finite for each number $n\in\mathbb{N}$ of players, 
the answer is a well-defined number, which possibly depends on $n$. If $v$ is a simple game with $n\ge 2$ players and player $i$ is not a
dictator, then there exists a player $j\neq i$ that is contained in some minimal winning coalition $S$. Thus, for the Shapley-Shubik, the 
Penrose-Banzhaf, the Public Good index, the Johnston index, and the Deegan-Packel index every player with power $1$ is a dictator. So, 
the condition that player $i$ is not a dictator is equivalent to $p_i(v)<1$ in the following four theorems. Moreover, $n\ge 2$ is implied 
for the number of players. As preparation we observe:
\begin{lemma}
  \label{lemma_characterization_dictator}
  If $v$ is a simple game with player set $N$, $v(N\backslash\{i\})=0$, and $v(\{i\})=1$, then player~$i$ is a dictator.
\end{lemma}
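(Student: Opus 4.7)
The plan is to unfold the definition of a dictator and exploit monotonicity of the simple game in two directions. Recall that player $i$ is a dictator when $\{i\}$ is winning and every other player is a null player, so I would aim to verify both conditions directly.

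First I would observe that the hypothesis $v(\{i\})=1$ immediately makes $\{i\}$ a winning coalition, and by monotonicity any coalition $S$ with $i\in S$ satisfies $v(S)\ge v(\{i\})=1$, hence is winning. Second, the hypothesis $v(N\setminus\{i\})=0$ combined with monotonicity in the reverse direction yields $v(T)\le v(N\setminus\{i\})=0$ for every $T\subseteq N\setminus\{i\}$, so every coalition not containing $i$ is losing.

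Putting these two observations together, the value $v(S)$ is completely determined by whether $i\in S$: one if yes, zero if no. From this it is immediate that for any $j\neq i$ and any $S\subseteq N\setminus\{j\}$, the two coalitions $S$ and $S\cup\{j\}$ contain $i$ or not simultaneously, so $v(S)=v(S\cup\{j\})$; thus $j$ is a null player. Together with $\{i\}$ being winning this is exactly the definition of $i$ being a dictator.

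There is no substantive obstacle here: the statement is essentially a direct consequence of monotonicity sandwiching the characteristic function between the two given values, and the only care needed is to invoke both directions of monotonicity and then check the null player condition for every $j\neq i$ in one line.
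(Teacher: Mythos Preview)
Your proof is correct and follows essentially the same approach as the paper. Both arguments use monotonicity in each direction to conclude that membership of $i$ alone determines $v(S)$; the paper phrases the conclusion via minimal winning coalitions (only $\{i\}$ is minimal winning, so every $j\neq i$ is a null player), while you verify the null player condition directly from $v(S)=v(S\cup\{j\})$, but this is only a cosmetic difference.
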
 
\begin{proof}
  Since $v(\{i\})=1$ coalition $N$ is not minimal winning. Due to $v(N\backslash\{i\})=0$ player~$i$ is the only player that is 
  contained in a minimal winning coalition, i.e., all other players are null players. So, $\{i\}$ is the unique minimal winning 
  coalition and player~$i$ is a dictator.  
\end{proof}

\begin{theorem}
  \label{thm_largest_ssi}
  For each simple game $v$ on $n\ge 2$ players and each player $i$ that is not a dictator, we have $\operatorname{SSI}_i(v)\le\frac{n-1}{n}$.
\end{theorem}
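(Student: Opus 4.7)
The plan is to exploit the classical fact that the Shapley--Shubik weights $\frac{|S|!(n-|S|-1)!}{n!}$ over $S\subseteq N\setminus\{i\}$ sum to $1$, so that $\operatorname{SSI}_i(v)\le 1$ with equality iff $i$ is critical in every coalition $S\cup\{i\}$. Thus to get a bound strictly below $1$ it suffices to identify at least one coalition $S\subseteq N\setminus\{i\}$ for which $v(S\cup\{i\})-v(S)=0$, and to lower-bound its weight.

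The two natural candidates are the extreme coalitions $S=\emptyset$ and $S=N\setminus\{i\}$, both carrying weight $\frac{0!(n-1)!}{n!}=\frac{1}{n}$. The key step is the following dichotomy, which is exactly the contrapositive of Lemma~\ref{lemma_characterization_dictator}: since $i$ is not a dictator, either $v(\{i\})=0$ or $v(N\setminus\{i\})=1$. In the first case the term for $S=\emptyset$ contributes $v(\{i\})-v(\emptyset)=0$; in the second case the term for $S=N\setminus\{i\}$ contributes $v(N)-v(N\setminus\{i\})=1-1=0$ (using that $v(N)=1$ by surjectivity and monotonicity).

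Either way, I discard a term of weight $\frac{1}{n}$ from the maximal value $1$, obtaining $\operatorname{SSI}_i(v)\le 1-\frac{1}{n}=\frac{n-1}{n}$, which is the claim.

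There is no real obstacle here; the only subtlety is making sure one correctly invokes Lemma~\ref{lemma_characterization_dictator} in its contrapositive form and remembers that $v(N)=1$ for every simple game, so that the losing case genuinely kills a full term of weight $\tfrac{1}{n}$. The proof is very short and purely combinatorial.
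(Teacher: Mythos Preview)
Your proof is correct and follows essentially the same argument as the paper: use Lemma~\ref{lemma_characterization_dictator} (in contrapositive form) to conclude that either $v(\{i\})=0$ or $v(N\setminus\{i\})=1$, so that one of the two extreme terms $S=\emptyset$ or $S=N\setminus\{i\}$, each of weight $\tfrac{1}{n}$, vanishes from the sum that would otherwise equal~$1$. The paper phrases the bound $\operatorname{SSI}_i(v)\le 1$ as $\operatorname{SSI}_i(v)\le \operatorname{SSI}_1([1;1,0,\dots,0])$, but the substance is identical.
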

\begin{proof}
  We compute
  \begin{eqnarray*}
    \operatorname{SSI}_i(v)&=&\frac{1}{n!}\cdot\sum_{S\subseteq N\backslash\{i\}} |S|!\cdot(n-|S|-1)!\cdot\left(v(S\cup\{i\})-v(S)\right)\\
    &\le& \frac{1}{n!}\cdot\sum_{S\subseteq N\backslash\{i\}} |S|!\cdot(n-|S|-1)!
    =\operatorname{SSI}_1([1;1,0,\dots,0])=1.
  \end{eqnarray*}
  Since either $v(N\backslash\{i\})=1$ or $v(\{i\})=0$, due to Lemma~\ref{lemma_characterization_dictator}, we have 
  $\operatorname{SSI}_i(v)\le 1-\frac{(n-1)!\cdot 1!}{n!}=\frac{n-1}{n}$. 
\end{proof}

We remark that the upper bound is met for $v=[n-1;n-1,1,\dots,1]$ and that it approaches $1$ as $n$ tends to infinity.

\begin{lemma}
  \label{lemma_special_weighted_game}
  For $n\ge 2$ and $v=[n-1;n-1,1,\dots,1]$ we have $\operatorname{SSI}(v)=\frac{1}{n(n-1)}\cdot \left((n-1)^2,1,\dots,1\right)$,  
  $\operatorname{BZI}(v)=\frac{1}{2^{n-1}+n-2}\cdot \left(2^{n-1}-1,1,\dots,1\right)$, and 
  $\operatorname{Js}(v)=\frac{1}{2^{n-1}}\cdot \left(2^{n-1}-1,\frac{1}{n-1},\dots,\frac{1}{n-1}\right)$.
\end{lemma}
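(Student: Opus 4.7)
The plan is to first unpack the coalition structure of $v=[n-1;n-1,1,\dots,1]$ and then read off each index from simple critical-coalition counts. A coalition $S$ is winning iff its weight is at least $n-1$. If $1\in S$, the weight is already $n-1$, so every such $S$ wins; if $1\notin S$, then $S\subseteq\{2,\dots,n\}$ wins iff $S=\{2,\dots,n\}$. Hence the minimal winning coalitions are exactly $\{1\}$ and $\{2,\dots,n\}$, and there are $2^{n-1}$ winning coalitions in total.

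For the Shapley-Shubik index I would next determine when a given player is a swinger. Player~$1$ swings in $S\cup\{1\}$ iff $S\subsetneq\{2,\dots,n\}$, and the contribution summed over $s=|S|=0,\dots,n-2$ telescopes via $\binom{n-1}{s}\tfrac{s!(n-s-1)!}{n!}=\tfrac{1}{n}$, giving $\operatorname{SSI}_1(v)=\tfrac{n-1}{n}=\tfrac{(n-1)^2}{n(n-1)}$. For $i\ge 2$, player~$i$ can only swing when $1\notin S$ (otherwise $S$ already wins), and this forces $S\cup\{i\}=\{2,\dots,n\}$, yielding exactly one swing with weight $\tfrac{(n-2)!\,1!}{n!}=\tfrac{1}{n(n-1)}$. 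The claim for $\operatorname{SSI}$ follows.

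For the Penrose-Banzhaf index I count critical coalitions directly. Player~$1$ is critical for every winning $S\ni 1$ except $S=N$ (since $N\setminus\{1\}=\{2,\dots,n\}$ already wins), giving $2^{n-1}-1$. A player $i\ge 2$ is critical only for $S=\{2,\dots,n\}$ by the same argument as before, giving $1$ each. Summing yields the normaliser $2^{n-1}+n-2$, and the claimed $\operatorname{BZI}$ drops out. For the Johnston index I reuse the same list of critical pairs but weight each by the reciprocal of the number of critical players in that coalition. In every coalition $\{1\}\cup T$ with $T\subsetneq\{2,\dots,n\}$ only player~$1$ is critical (removing any $j\in T$ still leaves player~$1$, hence a winning coalition), so player~$1$ collects a raw score of $2^{n-1}-1$. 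In the single coalition $\{2,\dots,n\}$ all $n-1$ small players are critical, so each collects $\tfrac{1}{n-1}$. The raw total is $(2^{n-1}-1)+1=2^{n-1}$, and dividing gives the stated $\operatorname{Js}$.

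The only nonroutine step is the Johnston computation, because one has to verify the non-obvious fact that in every mixed coalition containing player~$1$ the small players are rendered non-critical; this is precisely what the minimal-winning structure of $v$ forces. Once that observation is in place, everything reduces to binomial bookkeeping that has already been done above.
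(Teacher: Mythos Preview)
Your argument is correct and follows the same route as the paper: both identify that player~$1$ swings for every proper subset $S\subsetneq N\setminus\{1\}$ while each small player swings only for $S=N\setminus\{1,j\}$, and then read off the three indices from these counts. Your write-up is simply more explicit than the paper's (you carry out the $\operatorname{SSI}$ summation and spell out the Johnston criticality check that only player~$1$ is critical in any coalition containing him), but the underlying decomposition is identical.
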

\begin{proof}
  For all $2^{n-1}-1$ coalitions $\emptyset\subseteq S\subsetneq N\backslash\{i\}$ we have $v(S\cup\{i\})-v(S)=1$, while 
  $v(N)-v(N\backslash\{i\})=0$. For any player $j\in N\backslash\{i\}$ the only coalition $S\subseteq N\backslash\{j\}$ with 
  $v(S\cup\{j\})-v(S)=1$ is given by $S=N\backslash\{i,j\}$. 
\end{proof}

\begin{theorem}
  \label{thm_largest_bzi}
  For each simple game $v$ on $n\ge 2$ players and each player $i$ that is not a dictator, we have 
  $\operatorname{BZI}_i(v)\le\frac{2^{n-1}-1}{2^{n-1}+n-2}$.
\end{theorem}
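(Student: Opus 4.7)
The plan is to bound the unnormalized Banzhaf score $\eta_i(v):=\sum_{S\subseteq N\backslash\{i\}}\bigl(v(S\cup\{i\})-v(S)\bigr)$ from above and the total $\sum_{j=1}^n\eta_j(v)$ from below, then combine them using the monotonicity of $x\mapsto x/(x+c)$. For the upper bound, Lemma~\ref{lemma_characterization_dictator} applied to the non-dictator $i$ forces either $v(\{i\})=0$ or $v(N\backslash\{i\})=1$; in either case one of the $2^{n-1}$ summands defining $\eta_i$ vanishes, so $\eta_i(v)\le 2^{n-1}-1$, which is already the numerator of the target bound.

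For the denominator I would first reduce to the case in which no player in $N\backslash\{i\}$ is a null player. The key observation is that deleting a null player $j$ from the player set produces a game $v'$ on $n-1$ players satisfying $\eta_k(v)=2\eta_k(v')$ for every $k\neq j$ and $\eta_j(v)=0$, hence $\operatorname{BZI}_i(v)=\operatorname{BZI}_i(v')$. Iterating peels off all null players distinct from $i$. In the resulting reduced game on $n-k$ players every remaining $j\neq i$ contributes $\eta_j\ge 1$, so $\sum_{j\neq i}\eta_j\ge n-k-1$. Combined with the upper bound on $\eta_i$ this yields $\operatorname{BZI}_i(v)\le\frac{2^{n-k-1}-1}{2^{n-k-1}+n-k-2}$.

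The last step is to show that $f(m):=\frac{2^{m-1}-1}{2^{m-1}+m-2}$ is non-decreasing in $m\ge 2$, so that the bound for the reduced game on $n-k$ players transfers to the claimed bound $f(n)$. Clearing denominators in $f(m+1)\ge f(m)$ reduces to the inequality $(m-2)2^{m-1}+1\ge 0$, which is immediate. This monotonicity verification, together with the bookkeeping of the null-player reduction, is the main technical piece of the argument; everything else is a direct consequence of Lemma~\ref{lemma_characterization_dictator} and the observation that any non-null player has at least one swing coalition.
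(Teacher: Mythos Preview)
Your proof is correct and follows essentially the same approach as the paper: bound the raw swing count $\eta_i$ above via Lemma~\ref{lemma_characterization_dictator}, bound each other $\eta_j$ below by $1$ once null players are removed, and then compare the resulting fraction to the target bound. The only cosmetic difference is that the paper organizes the null-player reduction as an induction on $n$ (removing one null player and checking $f(n-1)<f(n)$), whereas you strip all null players at once and afterwards invoke the monotonicity of $f(m)=\frac{2^{m-1}-1}{2^{m-1}+m-2}$; the underlying inequality $(m-2)2^{m-1}+1\ge 0$ is identical in both versions.
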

\begin{proof}
  Let $\psi_j(v)=\sum_{S\subseteq N\backslash\{j\}}\left(v(S\cup\{j\})-v(S)\right)$, i.e., the Penrose-Banzhaf index before 
  normalization. Due to Lemma~\ref{lemma_characterization_dictator} we have $v(N\backslash\{i\})=1$ or $v(\{i\})=0$, so that 
  $\psi_i(v)\le 2^{n-1}-1$. Assuming that $v$ contains no null player, we have $\psi_j(v)\ge 1$ for all $j\in N\backslash\{i\}$ 
  since $j$ is contained in at least one minimal winning coalition. Thus, $\operatorname{BZI}_i(v)\le \frac{2^{n-1}-1}{2^{n-1}+n-2}$. 
  If $v$ contains at least one null player $h$. Let $v'$ be the simple game with player set $N\backslash \{h\}$ defined by 
  $v'(S)=v(S)$ for all $\emptyset\subseteq S\subseteq N\backslash\{h\}$. For any player $j\in N\backslash\{h\}$ and any coalition 
  $S\subseteq N\backslash\{j,h\}$ we have $v(S\cup\{j\})-v(S)=1$ if and only if $v(S\cup\{j,h\})-v(S\cup\{h\})=1$, so that 
  $\psi_j(v)=2\psi_j(v')$. By induction we get $\operatorname{BZI}_i(v)\le \frac{2^{n-2}-1}{2^{n-2}+n-3}=1-\frac{n-2}{2^{n-2}+n-3}<
  1-\frac{n-1}{2^{n-1}+n-2}=\frac{2^{n-1}-1}{2^{n-1}+n-2}$ for all $n\ge 3$.  
  (Note that any simple game with $n\le 2$ players either contains a dictator or no null player at all.)    
\end{proof}

Similarly, we obtain:
\begin{theorem}
  \label{thm_largest_js}
  For each simple game $v$ on $n\ge 2$ players and each player $i$ that is not a dictator, we have 
  $\operatorname{Js}_i(v)\le\frac{2^{n-1}-1}{2^{n-1}}$.
\end{theorem}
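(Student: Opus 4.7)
The plan is to first upper bound the unnormalised Johnston score $\psi^J_i(v)$ of player $i$ (so that $\operatorname{Js}_i(v)=\psi^J_i(v)/\sum_k\psi^J_k(v)$) and then control the denominator, which equals the number $N_c$ of coalitions with at least one critical player. Writing $c(T)$ for the number of critical players in $T$, the upper bound on $\psi^J_i$ mimics the opening of the proof of Theorem~\ref{thm_largest_bzi}: by Lemma~\ref{lemma_characterization_dictator}, player $i$ cannot be critical in all $2^{n-1}$ coalitions containing him (that would force both $v(\{i\})=1$ and $v(N\setminus\{i\})=0$, making $i$ a dictator), so $\psi^J_i(v)\le 2^{n-1}-1$.

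Next I would split on the value of $v(N\setminus\{i\})$. If $v(N\setminus\{i\})=1$, then $N\setminus\{i\}$ is winning and hence contains a minimal winning coalition $M$ with $i\notin M$; every player of $M$ is critical in $M$, so $M$ alone contributes $|M|\cdot 1/|M|=1$ to $\sum_{j\ne i}\psi^J_j(v)$. Combining $\sum_k\psi^J_k\ge\psi^J_i+1$ with $\psi^J_i\le 2^{n-1}-1$ and the monotonicity of $x\mapsto x/(x+1)$ gives $\operatorname{Js}_i(v)\le(2^{n-1}-1)/2^{n-1}$.

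The remaining case is $v(N\setminus\{i\})=0$, where Lemma~\ref{lemma_characterization_dictator} now forces $v(\{i\})=0$ as well, so $i$ is a veto player with $\{i\}$ losing. Consequently every minimal winning coalition has size at least $2$, every winning coalition $T$ has $i$ critical (so $N_c$ equals the number $W$ of winning coalitions), and each winning coalition contains a minimal winning one. Let $m$ denote the number of minimal winning coalitions. I would then combine two estimates: $W\le m\cdot 2^{n-2}$, because each minimal winning coalition $M$ has at most $2^{n-|M|}\le 2^{n-2}$ supersets; and $\psi^J_i\le W-m/2$, obtained by bounding $1/c(T)\le 1/2$ on minimal winning $T$ and $1/c(T)\le 1$ otherwise. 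Substituting the first bound into the second yields $\psi^J_i\le W\cdot(2^{n-1}-1)/2^{n-1}$, and dividing by $W=N_c$ closes the case.

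The main obstacle is this veto case. A naive lower bound such as $\psi^J_j\ge 1/n$ for every non-null $j\ne i$ gives only $\sum_{j\ne i}\psi^J_j\ge(n-1)/n$, which is not strong enough to pair against $\psi^J_i\le 2^{n-1}-1$. It is the double-counting $W\le m\cdot 2^{n-2}$, combined with the fact that $c(T)\ge 2$ on every minimal winning coalition $T$, that ultimately saves the bound.
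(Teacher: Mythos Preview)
Your argument is correct. The case $v(N\setminus\{i\})=1$ is exactly the paper's proof: bound $\psi^J_i\le 2^{n-1}-1$ via Lemma~\ref{lemma_characterization_dictator}, pick a minimal winning coalition $S\subseteq N\setminus\{i\}$, note that $S$ alone contributes at least $1$ to $\sum_{j\ne i}\psi^J_j$, and conclude by monotonicity of $x\mapsto x/(x+1)$.

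Where you diverge from the paper is that you treat the veto case $v(N\setminus\{i\})=0$ separately, whereas the paper simply asserts the existence of a minimal winning coalition $S\subseteq N\setminus\{i\}$ without restricting to the case where such an $S$ exists. Your additional argument---combining $W\le m\cdot 2^{n-2}$ (each minimal winning coalition has at most $2^{n-2}$ supersets since $|M|\ge 2$) with $\psi^J_i\le W-m/2$ (since $c(T)\ge 2$ on minimal winning $T$) and $\sum_k\psi^J_k=N_c=W$---is a clean way to close that case. So your proof is not only correct but in fact supplies a justification for the veto situation that the paper's own proof leaves implicit.
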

\begin{proof}
  Let 
  $
    \psi'_j(v)=\sum\limits_{S\subseteq N\backslash\{j\}} \!\!\!\left(v(S\cup\{j\})-v(S)\right) / \left(\#\text{ of critical players of $S$}\right)
  $,
  i.e., the Johnston index before normalization. Due to Lemma~\ref{lemma_characterization_dictator} we have $v(N\backslash\{i\})=1$ 
  or $v(\{i\})=0$, so that $\psi'_i(v)\le 2^{n-1}-1$. Since there exists a minimal winning coalition $S\subseteq N\backslash\{i\}$, 
  we have $\sum_{j\in N\backslash\{i\}} \psi'_j(v)\ge \sum_{j\in S} \psi'_j(v)\ge 1$, so that $\operatorname{Js}_i(v)\le\frac{2^{n-1}-1}{2^{n-1}}$.     
\end{proof}

Again, the upper bound of Theorem~\ref{thm_largest_bzi} and Theorem~\ref{thm_largest_js} is met for $v=[n-1;n-1,1,\dots,1]$, see 
Lemma~\ref{lemma_special_weighted_game}, and approaches $1$ as $n$ tends to infinity.

\begin{theorem}
  \label{thm_largest_pgi}
  For each simple game $v$ on $n\ge 2$ players and each player $i$ that is not a dictator, we have 
  $\operatorname{PGI}_i(v)\le\frac{1}{2}$ and $\operatorname{DP}_i(v)\le\frac{1}{2}$.
\end{theorem}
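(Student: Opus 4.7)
The plan is to rewrite the denominators of both indices by swapping the order of summation, and then split into two cases according to whether the singleton $\{i\}$ is winning. Let $m_i$ be the number of minimal winning coalitions containing player~$i$, let $W$ denote the total number of minimal winning coalitions, and set $d_i=\sum_{S\ni i}1/|S|$ (sum over minimal winning $S$). A brief double counting gives $\operatorname{PGI}_i(v)=m_i/D_1$ with $D_1=\sum_{S}|S|$ (again over minimal winning $S$), and $\operatorname{DP}_i(v)=d_i/W$. In both cases the factor of $1/2$ then emerges cleanly.

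First I would treat the case $v(\{i\})=0$. Here every minimal winning coalition containing $i$ has size at least two, so $D_1\ge 2m_i$, which immediately yields $\operatorname{PGI}_i(v)\le 1/2$. Similarly, each summand in $d_i$ is at most $1/2$, hence $d_i\le m_i/2\le W/2$, giving $\operatorname{DP}_i(v)\le 1/2$.

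For the complementary case $v(\{i\})=1$, Lemma~\ref{lemma_characterization_dictator} forces $v(N\setminus\{i\})=1$, since $i$ is not a dictator; thus there is a minimal winning coalition $T\subseteq N\setminus\{i\}$. At the same time $\{i\}$ is itself minimal winning, and it is the unique minimal winning coalition containing~$i$, because any strictly larger coalition containing $i$ is winning by monotonicity but fails minimality. Therefore $m_i=d_i=1$, while the existence of $T\ne\{i\}$ gives $W\ge 2$ and $D_1\ge 1+|T|\ge 2$. Both $\operatorname{PGI}_i(v)\le 1/2$ and $\operatorname{DP}_i(v)\le 1/2$ follow at once.

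I do not expect a genuine obstacle. The argument rests entirely on the double-counting identities for the two denominators together with Lemma~\ref{lemma_characterization_dictator}; the only point that warrants a line of justification is the uniqueness of $\{i\}$ as a minimal winning coalition containing~$i$ in the second case.
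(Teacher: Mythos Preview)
Your proposal is correct and follows essentially the same two-case split as the paper's proof: both distinguish whether $\{i\}$ is winning, invoke Lemma~\ref{lemma_characterization_dictator} in the first case to produce a second minimal winning coalition, and use the size-at-least-two observation in the second case. Your version is somewhat more explicit in spelling out the double-counting identities $\sum_j m_j=\sum_S |S|$ and $\sum_j d_j=W$ for the denominators, whereas the paper leaves these implicit, but the underlying argument is the same.
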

\begin{proof}
  If $\{i\}$ is a winning coalition, then it is the only minimal winning coalition containing player~$i$. Due to 
  Lemma~\ref{lemma_characterization_dictator}, we have $v(N\backslash\{i\})=1$, so that there exists a minimal winning 
  coalition $S\subseteq N\backslash\{i\}$. Thus, $\operatorname{PGI}_i(v)\le \frac{1}{2}$. If $\{i\}$ is a losing coalition, then 
  either $i$ is a null player or any minimal winning coalition containing player $i$ has a cardinality of at least $2$, so that 
  $\operatorname{PGI}_i(v)\le \frac{1}{2}$. The same reasoning 
  applies to the Deegan-Packel index.  
\end{proof}

The upper bound is attained for $[1;1,1,0,\dots,0]$ and $[2;1,1,0,\dots,0]$. 

We remark that for \emph{complete simple games}, a class in between weighted and simple games, a power index (called 
\emph{Shift index}) based on counting so-called \emph{shift-minimal} winning coalitions instead of minimal winning coalitions 
and the corresponding equal division version (called \emph{Shift Deegan-Packel index index}) can be defined, see e.g.\ 
\cite{kurz2016inverse} and the references therein. The result of Theorem~\ref{thm_largest_pgi} and its proof directly transfer. 

\medskip

The nucleolus is special. 
Of course the nucleolus also attributes power $1$ to a dictator. However, there are also non-dictatorial simple games where one player 
gets a nucleolus power of $1$. It is well known that the nucleolus of a simple game with $k\ge 1$ veto players assigns 
$\frac{1}{k}$ to the veto players and zero to the remaining players. For $k=1$ we obtain all simple games with a player having 
full nucleolus power.  

\begin{proposition}
  \label{prop_characterization_nuc_1}
  If $v$ is a simple game and $i$ be a player with $\operatorname{Nuc}_i(v)=1$, then $i$ is the unique veto player.
\end{proposition}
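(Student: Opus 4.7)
My plan is to exploit the lexicographic minimality of $x^\star:=\operatorname{Nuc}(v)$ by producing an explicit competitor that beats $x^\star$ unless $i$ is a veto player, after which uniqueness is immediate. Since $x_i^\star=1$, $x^\star(N)=1$ and $x^\star\ge 0$, all other coordinates vanish: $x_j^\star=0$ for every $j\ne i$. I would split the claim into two parts: (a) $i$ is a veto player, and (b) no other player is a veto player.

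For (a), assume toward a contradiction that some winning coalition $S^\star$ omits $i$. Then $x^\star(S^\star)=0$, so $e(S^\star,x^\star)=1$ and hence $E_1(x^\star)\ge 1$. The idea is now to redistribute a tiny amount of mass from $i$ to the other players: for $\delta\in(0,1/(n-1))$ set $y_i=1-(n-1)\delta$ and $y_j=\delta$ for $j\ne i$, so that $y(N)=1$ and $y\ge 0$. A routine case split on whether $i\in T$ and whether $T$ is winning will give $e(T,y)\le(n-1)\delta$ when $i\in T$ (since $y(T)\ge y_i$) and $e(T,y)\le 1-\delta$ when $i\notin T$ and $T$ is winning (since then $y(T)\ge\delta$); the losing cases yield non-positive excesses. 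Both active bounds are strictly less than $1$, so $E_1(y)<1\le E_1(x^\star)$, contradicting the lex-minimality of $x^\star$.

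For (b), once $i$ is known to be a veto player I would invoke the fact recalled in the paragraph preceding the proposition: a simple game with $k\ge 1$ veto players has nucleolus value $1/k$ on each veto player. Setting this equal to $1$ forces $k=1$, so $i$ is the only veto player.

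The main obstacle is the uniform excess bound in (a): one must check that \emph{every} coalition, not just the short ones, has excess strictly below $1$ at $y$. The two complementary estimates $y_i\ge 1-(n-1)\delta$ and $\min_{j\ne i}y_j=\delta$ do exactly this, and the constraint $\delta<1/(n-1)$ ensures the first bound is indeed strict.
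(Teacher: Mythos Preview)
Your argument is correct and follows essentially the same route as the paper: assume $i$ is not a veto player, pick a winning coalition $S$ with $i\notin S$, observe that $e(S,x^\star)=1$, and exhibit an imputation whose maximum excess is strictly below $1$, contradicting lexicographic minimality. The only cosmetic difference is the choice of competitor: the paper simply takes the uniform vector $\tfrac{1}{n}(1,\dots,1)$, for which every excess is at most $\tfrac{n-1}{n}<1$, whereas you use a small perturbation $y$ of $x^\star$ and a case split; both work, but the uniform vector avoids the parameter $\delta$ and the associated bookkeeping. Your handling of uniqueness via the $1/k$ fact is also the same as the paper's, just placed after rather than before the contradiction argument.
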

\begin{proof}
  If player $i$ is the unique veto player, then $\operatorname{Nuc}_i(v)=1$. If another player is the unique veto player, then 
  $\operatorname{Nuc}_i(v)=0$. If there are at least $2$ veto players, then $\operatorname{Nuc}_i(v)\le \frac{1}{2}$. Thus, we can 
  assume that $v$ contains no veto players, so that there exists a winning coalition $S\subseteq N\backslash\{i\}$, and $\operatorname{Nuc}_i(v)=1$. 
  Abbreviating $\operatorname{Nuc}(v)$ by $x^\star$, we have $x^{\star}(S)=0$, so that $\max_{C\subseteq N} e(C,x^\star)\ge e(S,x^\star)=1$.
  However, $\max_{C\subseteq N} e\!\left(C,\frac{1}{n}\cdot(1,\dots,1)\right)\le\frac{n-1}{n}<1$, which is a contradiction.   
\end{proof}

\begin{theorem}
  \label{thm_largest_nucleolus}
  Let $v$ be a simple game and $i$ be a player with $\operatorname{Nuc}_i(v)<1$, then $\operatorname{Nuc}_i(v)\le \frac{1}{2}$. 
\end{theorem}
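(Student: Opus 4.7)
The plan is to prove the bound by contradiction, after handling the easy veto-player cases. By Proposition~\ref{prop_characterization_nuc_1}, the hypothesis $\operatorname{Nuc}_i(v)<1$ tells us that $i$ is not the unique veto player of $v$. If $v$ has $k\ge 2$ veto players, the formula for the nucleolus recalled in the paragraph preceding Proposition~\ref{prop_characterization_nuc_1} yields $\operatorname{Nuc}_i(v)\in\{0,1/k\}$, hence $\operatorname{Nuc}_i(v)\le 1/2$. If $v$ has a unique veto player, it must differ from $i$, and then $\operatorname{Nuc}_i(v)=0$. So the only non-trivial case is when $v$ has no veto player at all.

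In this main case I would assume, for contradiction, that $x_i^\star:=\operatorname{Nuc}_i(v)>1/2$, writing $x^\star=\operatorname{Nuc}(v)$. Since no player vetoes, some winning coalition lies inside $N\setminus\{i\}$; monotonicity then forces $v(N\setminus\{i\})=1$, so $e(N\setminus\{i\},x^\star)=1-(1-x_i^\star)=x_i^\star>1/2$. Setting $\mu:=\max_{C\subseteq N} e(C,x^\star)$, we get $\mu\ge x_i^\star>1/2$. The structural key is that every $C$ with $e(C,x^\star)=\mu$ must satisfy $i\notin C$: such a $C$ is necessarily winning (else $e(C,x^\star)=-x^\star(C)\le 0<\mu$), so $x^\star(C)=1-\mu\le 1-x_i^\star<x_i^\star$, which is incompatible with $i\in C$ since that would force $x_i^\star\le x^\star(C)$.

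Using this, I would construct a strictly lex-smaller vector by setting $y_i=x_i^\star-\varepsilon$ and $y_j=x_j^\star+\varepsilon/(n-1)$ for $j\ne i$, with $\varepsilon>0$ small. A direct check gives $y(N)=1$, together with $e(C,y)=e(C,x^\star)-|C|\varepsilon/(n-1)$ when $i\notin C$ and $e(C,y)=e(C,x^\star)+(n-|C|)\varepsilon/(n-1)$ when $i\in C$. Every maximizer is non-empty and avoids $i$, so its excess drops to at most $\mu-\varepsilon/(n-1)<\mu$. Every non-maximizer $C$ satisfies $e(C,x^\star)\le\mu-\delta$ for some fixed $\delta>0$ (the finite minimum gap over the finitely many non-maximizing coalitions), and its excess grows by at most $\varepsilon$; taking $\varepsilon<\delta$ keeps it below $\mu$. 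Consequently $\max_C e(C,y)<\mu$, so $E(y)<_{\mathrm{lex}} E(x^\star)$, contradicting the lex-minimality of $x^\star$.

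The main obstacle is the structural claim that every excess-maximizer avoids player $i$; this is where the assumption $x_i^\star>1/2$ enters essentially, via the chain $x_i^\star\le x^\star(C)=1-\mu\le 1-x_i^\star$. Once that claim is in place, the uniform-spread perturbation is standard, and the veto-player cases reduce to the explicit nucleolus formula already quoted in the paper.
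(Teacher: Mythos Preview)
Your proposal is correct and follows essentially the same approach as the paper: reduce to the veto-free case, assume $x_i^\star>\tfrac12$, and redistribute mass from player~$i$ uniformly to the others to contradict lexicographic minimality. The only cosmetic difference is that the paper fixes $\varepsilon=x_i^\star-\tfrac12$ (pushing $x_i$ down to exactly $\tfrac12$) and argues directly that every winning coalition gains weight, whereas you take $\varepsilon$ small and invoke a finite $\delta$-gap; both yield the same contradiction.
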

\begin{proof}
  As in the proof of Proposition~\ref{prop_characterization_nuc_1} we can assume that $v$ contains no veto player, choose a winning 
  coalition $S\subseteq N\backslash\{i\}$, and introduce the abbreviation $x^\star=\operatorname{Nuc}(v)$. Note that we have 
  $n\ge 2$ players. Assume $x^\star_i>\frac{1}{2}$ 
  and set $\varepsilon:=x^\star_i-\frac{1}{2}>0$. Let $T$ be a winning coalition with minimal $x^\star(T)$. Since 
  $x^\star(S)\le 1-x^\star_i= \frac{1}{2}-\varepsilon$, we have $x^\star(T)\le\frac{1}{2}-\varepsilon$. Now we define $x_i=\frac{1}{2}$ and 
  $x_j=x^\star_j+\frac{\varepsilon}{n-1}$ for all $j\in N\backslash\{i\}$. For each winning coalition
  $W$ with $i\in W$ we have $x(W)\ge \frac{1}{2}>x^\star(T)$ and for each winning coalition $W'$ with $i\notin W'$ we have 
  $x(W')=x^\star(W')+|W'|\cdot\frac{\varepsilon}{n-1}\ge x^\star(T)+\frac{\varepsilon}{n-1}>x^\star(T)$, 
  which is a contradiction to the minimality 
  of the nucleolus.
\end{proof}

The upper bound is, e.g., attained for simple games with exactly two veto players. However, there are many other examples.

\begin{theorem}
  \label{thm_largest_msri}
  For each weighted game $v$ on $n\ge 2$ players and each player $i$ that is not a dictator, we have 
  $\operatorname{MSRI}_i(v)\le\frac{1}{2}$.
\end{theorem}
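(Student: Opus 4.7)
The plan is to show that in every minimum sum integer representation $[q;w_1,\dots,w_n]$ of $v$, player~$i$'s weight satisfies $w_i\le\sum_{j\ne i}w_j$. Since the total weight $W=\sum_j w_j$ is the same constant across all such representations (that is what ``minimum sum'' means), averaging $w_i/W$ over them then yields $\operatorname{MSRI}_i(v)\le\tfrac12$ directly.

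I would fix a minimum sum representation, assume for contradiction that $w_i>\sum_{j\ne i}w_j$, and split into two cases according to $v(\{i\})$. In the first case, $v(\{i\})=1$, so $w_i\ge q$. Since $i$ is not a dictator, Lemma~\ref{lemma_characterization_dictator} forces $v(N\setminus\{i\})=1$, hence $\sum_{j\ne i}w_j\ge q$ and therefore $w_i>q$. Every coalition $S\ni i$ then satisfies $w(S)\ge w_i\ge q+1$ (using that weights and quota are integers), so decreasing $w_i$ by one yields a representation of the same game with strictly smaller weight sum, contradicting minimality.

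In the second case, $v(\{i\})=0$, which forces $\sum_{j\ne i}w_j<w_i<q$; thus $N\setminus\{i\}$ is losing and $i$ is a veto player. Because $i$ is not a dictator, there is a minimal winning coalition containing $i$ and at least one other player $j$, whose weight must satisfy $w_j\ge 1$ by minimality of that coalition, so $\sum_{j\ne i}w_j\ge 1$, $w_i\ge 2$, and $q\ge 3$. I would now simultaneously reduce $w_i$ and $q$ by one: for coalitions containing $i$, both sides of the quota inequality drop by one, preserving winning/losing status; for coalitions not containing $i$, the chain $\sum_{j\ne i}w_j\le w_i-1\le q-2$ keeps their weight strictly below the new quota $q-1$, so they remain losing. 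The new representation again has a smaller weight sum, a contradiction.

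The main obstacle is this veto case: reducing $w_i$ alone can be blocked by a tight winning coalition containing $i$, so the argument hinges on also decreasing the quota. This move is justified by the strict inequality $w_i>\sum_{j\ne i}w_j$ together with integrality, which provides a buffer of at least two between the weight of every losing coalition and the original quota.
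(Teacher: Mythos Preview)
Your proof is correct and follows essentially the same approach as the paper: both fix a minimum sum integer representation, assume $w_i>\sum_{j\ne i}w_j$, and derive a contradiction by exhibiting a smaller representation, splitting into the case where $\{i\}$ is winning (equivalently $q\le r$ in the paper's notation, where one can lower $w_i$) and the veto case (equivalently $q\ge r+1$, where one lowers $w_i$ and $q$ simultaneously). The only cosmetic difference is that the paper frames the case distinction via $q$ versus $r:=\sum_{j\ne i}w_j$ and, in the first case, drops $w_i$ all the way to $r$ instead of by one.
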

\begin{proof}
  Suppose $(q;w)\in\mathbb{N}^{n+1}$ is a representation of $v$ with minimum $\sum_{i=1}^n w_i$ and set $r:=\sum_{j\in N\backslash\{i\}} w_j\ge 1$. 
  Assume $w_i\ge r+1$. If $q\le r$, then we can replace $w_i$ by $r$ and obtain a representation with a smaller sum, a contradiction. 
  If $q\ge r+1$, then player~$i$ is a veto player. Note that $q\ge w_i+1\ge r+2$, since otherwise player~$i$ is a dictator. However, 
  reducing $q$ and $w_i$ by $1$ gives a representation with a smaller sum, again a contradiction. Thus, we have $w_i\le r$ for every minimum 
  sum representation, so that $\operatorname{MSRI}_i(v)\le\frac{1}{2}$.    
\end{proof}
If player~$1$ is a dictator in a weighted game $v$, then the unique minimum sum representation is given by $[1;1,0,\dots]$, so that 
$\operatorname{MSRI}_1(v)=1$. The upper bound is met by $[k;k,1,\dots,1,0,\dots,0]$, with $k\ge 2$ players of weight $1$.

For the average weight and the average representation index even a dictator does not get a power of $1$ for $n\ge 2$ players.
\begin{theorem}
  \label{thm_largest_average}
  For each weighted game $v$ on $n\ge 1$ players and each player $i$, we have 
  $\operatorname{AWI}_i(v)\le\frac{n+1}{2n}$ and $\operatorname{ARI}_i(v)\le\frac{n+3}{2(n+1)}$.
\end{theorem}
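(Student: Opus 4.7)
The plan is to reduce to the dictator game $v^{\star}=[1;1,0,\dots,0]$, which saturates both bounds, and to handle an arbitrary weighted game by a case split on the structural role of player~$i$. First I would verify the bounds directly on $v^{\star}$: its polyhedron $P^{\text{w}}(v^{\star})=\{w\in\mathbb{R}_{\ge 0}^{n}:\sum_{j=1}^{n}w_{j}=1,\,w_{1}\ge 1/2\}$ is a half-simplex whose $(n-2)$-dimensional slice at $w_{1}=t$ has volume proportional to $(1-t)^{n-2}$, so that
\[
\operatorname{AWI}_{1}(v^{\star})=\frac{\int_{1/2}^{1}t(1-t)^{n-2}\,dt}{\int_{1/2}^{1}(1-t)^{n-2}\,dt}=\frac{n+1}{2n}.
\]
For $\operatorname{ARI}$ the quota fibre over $w$ has length $2w_{1}-1$, which multiplies the integrand and produces $\operatorname{ARI}_{1}(v^{\star})=\frac{n+3}{2(n+1)}$ by an analogous Beta integral.

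For a general weighted $v$ and player $i$ the argument would split into two cases according to whether both a minimal winning $S\not\ni i$ and a maximal losing $T\ni i$ exist. If they do (Case~A), the centroid $c$ of $P^{\text{w}}(v)$ lies in the polyhedron and therefore satisfies $c(S)\ge c(T)$; combined with $c(S)\le 1-c_{i}$ and $c(T)\ge c_{i}$, this forces $c_{i}\le 1/2$, which is strictly below both target bounds. Applying the same inequality to the centroid of $P^{\text{r}}(v)$ settles Case~A for $\operatorname{ARI}$ as well.

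In the complementary Case~B, player~$i$ is a veto player, a passer, or the dictator, and the key observation is that the only constraints in $P^{\text{w}}(v)$ that involve $w_{i}$ are lower bounds of the form $w_{i}\ge w(T)-w(S\setminus\{i\})$ arising from minimal winning $S\ni i$ and maximal losing $T\not\ni i$. Since $w(T)\le 1-w_{i}$, every such inequality is automatic once $w_{i}\ge 1/2$. Consequently the slice $A(t)=\operatorname{vol}_{n-2}(P^{\text{w}}(v)\cap\{w_{i}=t\})$ equals $M(1-t)^{n-2}$ on $[1/2,1]$ for some constant $M$ coming from the $t$-independent constraints on $w_{-i}$, and is bounded by $M(1-t)^{n-2}$ on $[0,1/2]$. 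Splitting the defining ratio at $t=1/2$ then expresses $\operatorname{AWI}_{i}(v)$ as a convex combination of the dictator value $(n+1)/(2n)$ and a quantity bounded by $1/2$, yielding the claimed bound. The $\operatorname{ARI}$ argument is parallel but integrates against the quota-length weight $L(w)-U(w)=\min_{S}w(S)-\max_{T}w(T)$: in the dictator subcase this recovers the tight value $(n+3)/(2(n+1))$, while in the non-dictator veto/passer subcases $L-U$ scales as $(1-t)$ on $[1/2,1]$ and produces the smaller value $(n+2)/(2(n+1))$, so the same convex combination closes the estimate.

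The main obstacle is verifying, in Case~B, the claim that the slice is "full" for $w_{i}\ge 1/2$. This requires partitioning the defining inequalities of $P^{\text{w}}(v)$ by the four possibilities for the membership of $i$ in the pair $(S,T)$: the inequalities with $i\in S\cap T$ or $i\notin S\cup T$ are scale-invariant in $w_{-i}$ and carve out the same fixed region on every slice, so only the lower-bound constraints identified above can cut the slice---and these drop out at $w_{i}=1/2$. For $\operatorname{ARI}$ one additionally tracks how $L-U$ scales with $(1-t)$ on this slice, which is precisely what separates the dictator subcase ($L-U=2t-1$) from the non-dictator veto/passer subcases ($L-U\propto 1-t$).
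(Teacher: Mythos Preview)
Your argument is correct and considerably more thorough than the paper's. The paper simply asserts that ``the maximum values are clearly attained for a dictator'' and then computes $\operatorname{AWI}_1$ and $\operatorname{ARI}_1$ for $v^{\star}=[1;1,0,\dots,0]$ by direct integration (using the same Beta-type integrals you indicate). No justification for the reduction to the dictator is given.

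You, by contrast, actually prove that reduction. Your Case~A observation---that the existence of a minimal winning $S\not\ni i$ and a maximal losing $T\ni i$ forces $w_i\le\tfrac12$ pointwise on $P^{\text{w}}(v)$, hence on both centroids---is clean and immediately dispatches the generic situation. In Case~B (veto or passer) your classification of the defining inequalities by the membership of $i$ in $(S,T)$ is exactly right: only homogeneous constraints in $w_{-i}$ survive once $w_i\ge\tfrac12$, so the slice volume is $M(1-t)^{n-2}$ there and at most that for $t<\tfrac12$, which yields $\operatorname{AWI}_i(v)$ as a convex combination of $(n+1)/(2n)$ and something $\le\tfrac12$. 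The $\operatorname{ARI}$ refinement, noting that $L-U$ scales like $1-t$ (non-dictator) or equals $2t-1$ (dictator) on $[\tfrac12,1]$, giving $(n+2)/(2(n+1))$ versus $(n+3)/(2(n+1))$, is also correct.

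What your approach buys is an honest proof of the monotonicity the paper takes for granted; what the paper's approach buys is brevity. One small point worth making explicit in a full write-up: the convex-combination step tacitly uses that $P^{\text{w}}(v)$ has positive $(n-1)$-dimensional volume, which holds because a weighted game always admits a strict separating representation.
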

\begin{proof}
  The statement is true for $n=1$, so that we assume $n\ge 2$.
  The maximum values are clearly attained for a dictator. So, for $v=[1;1,0,\dots,0]$ we have 
  $$
    \int_{P^{\text{w}}(v)} \operatorname{d}w=\int_{\frac{1}{2}}^{1}\int_{0}^{1-w_1}\dots \int_{0}^{1-w_1-\dots-w_{n-2}}  
    \operatorname{d}w_{n-1} \dots\operatorname{d}w_2 \operatorname{d}w_1. 
  $$
  Since $\int_{0}^y \frac{(y-x)^k}{k!}\operatorname{d}x=\int_{0}^y \frac{x^k}{k!}\operatorname{d}x=\frac{y^{k+1}}{(k+1)!}$ for 
  all $k\in\mathbb{N}$, we recursively compute
  $$
    \int_{P^{\text{w}}(v)} \operatorname{d}w=\int_{\frac{1}{2}}^{1} \frac{(1-w_1)^{n-2}}{(n-2)!}\operatorname{d}w_1
    =\int_{0}^{\frac{1}{2}} \frac{w_1^{n-2}}{(n-2)!}\operatorname{d}w_1=\frac{1}{2^{n-1}\cdot(n-1)!} 
  $$
  and
  $$
    \int_{P^{\text{w}}(v)} w_1\operatorname{d}w=\int_{\frac{1}{2}}^{1} \frac{w_1(1-w_1)^{n-2}}{(n-2)!}\operatorname{d}w_1
    =\int_{0}^{\frac{1}{2}} \frac{(1-w_1)w_1^{n-2}}{(n-2)!}\operatorname{d}w_1=\frac{n+1}{2^{n}\cdot n!},
  $$
  so that $\operatorname{AWI}_1(v)=\frac{n+1}{2n}$. Similarly, we have 
  $$
    \int_{P^{\text{r}}(v)} \operatorname{d}w=\int_{0}^{1}\int_{\max\{q,1-q\}}^1\int_{0}^{1-w_1}\!\!\!\!\dots \int_{0}^{1-w_1-\dots-w_{n-2}}  
    \operatorname{d}w_{n-1} \dots\operatorname{d}w_2 \operatorname{d}w_1 \operatorname{d}q, 
  $$ 
  so that
  $$
    \int_{P^{\text{r}}(v)} \operatorname{d}w 
    =2\int_{\frac{1}{2}}^{1}\int_{q}^1 \frac{(1-w_1)^{n-2}}{(n-2)!}\operatorname{d}w_1 \operatorname{d}q
    =\frac{1}{2^{n-1}\cdot n!}
  $$
  and
  $$
    \int_{P^{\text{r}}(v)} w_1\operatorname{d}w=2\int_{\frac{1}{2}}^{1}\int_{q}^1 \frac{w_1(1-w_1)^{n-2}}{(n-2)!}\operatorname{d}w_1 \operatorname{d}q
    =\frac{n+3}{2^{n}\cdot(n+1)!}.
  $$
  Thus, $\operatorname{ARI}_1(v)=\frac{n+3}{2(n+1)}$.
\end{proof}
We remark that the upper bound approaches $\frac{1}{2}$ when $n$ tends to infinity. 

\section{Conclusion}
\label{sec_conclusion}
\noindent
For several power indices $p$ we have determined tight upper bounds $\alpha_p(n)$ for $p_i(v)$, where $p_i(v)<1$ and $v$ is either a 
simple or a weighted game on $n$ players. The considered power indices fall into two classes: If $n$ tends to infinity, then 
$\alpha_p(n)$ approaches either $\frac{1}{2}$ or $1$. 

More precisely, this implies:
\begin{theorem}
  \label{thm_summary}
  Let $p$ be one of the power indices $\operatorname{PGI}$, $\operatorname{DP}$, $\operatorname{Nuc}$, $\operatorname{MSRI}$, $\operatorname{AWI}$, 
  or $\operatorname{ARI}$. For each $\varepsilon>0$ there exists an integer $N(\varepsilon)$ such that either $p_i(v)=1$ or 
  $p_i(v)\le \frac{1}{2}+\varepsilon$ for each simple game $v$ on $n\ge N(\varepsilon)$ players and an arbitrary player $1\le i\le n$.  
\end{theorem}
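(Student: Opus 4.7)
The plan is to derive Theorem~\ref{thm_summary} as a direct corollary of the tight upper bounds established in the six preceding theorems, treating each of the six indices in turn and choosing $N(\varepsilon)$ to be the maximum over all six of the thresholds needed. For the first four indices no passage to the limit is actually required: Theorem~\ref{thm_largest_pgi} gives $\operatorname{PGI}_i(v)\le \tfrac12$ and $\operatorname{DP}_i(v)\le\tfrac12$ whenever player~$i$ is not a dictator, and for these indices a dictator is the only player with power~$1$, so $p_i(v)<1$ forces $p_i(v)\le \tfrac12$ for every $n\ge 2$. The same reasoning covers $\operatorname{MSRI}$ via Theorem~\ref{thm_largest_msri}. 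For the nucleolus I would invoke Proposition~\ref{prop_characterization_nuc_1} to identify the case $\operatorname{Nuc}_i(v)=1$ with ``$i$ is the unique veto player,'' and then apply Theorem~\ref{thm_largest_nucleolus} to conclude $\operatorname{Nuc}_i(v)\le\tfrac12$ in every other case.

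The only indices where the bound actually depends on $n$ are $\operatorname{AWI}$ and $\operatorname{ARI}$. Here I would simply rewrite the bounds from Theorem~\ref{thm_largest_average} as
\[
\tfrac{n+1}{2n}=\tfrac{1}{2}+\tfrac{1}{2n},\qquad \tfrac{n+3}{2(n+1)}=\tfrac{1}{2}+\tfrac{1}{n+1},
\]
so that both are at most $\tfrac12+\varepsilon$ as soon as $n\ge \lceil 1/\varepsilon\rceil$. Note that for these two indices even a dictator satisfies $p_i(v)<1$ when $n\ge 2$, so the disjunction ``$p_i(v)=1$ or $p_i(v)\le\tfrac12+\varepsilon$'' reduces to the right-hand alternative and no separate discussion of the equality case is needed.

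Combining, I would set $N(\varepsilon):=\max\!\bigl(2,\lceil 1/\varepsilon\rceil\bigr)$; then for $n\ge N(\varepsilon)$ and every one of the six indices, the argument above yields either $p_i(v)=1$ (which, by the preceding theorems and Proposition~\ref{prop_characterization_nuc_1}, occurs precisely when $i$ is a dictator or the unique veto player, and never for $\operatorname{AWI}$ or $\operatorname{ARI}$) or $p_i(v)\le\tfrac12+\varepsilon$. There is no genuine obstacle in the proof; the only small verification is that the condition ``$p_i(v)<1$'' coincides with the hypothesis ``$i$ is not a dictator'' (respectively ``$i$ is not the unique veto player'' for the nucleolus) used in the referenced theorems, which is immediate from the definitions and from Proposition~\ref{prop_characterization_nuc_1}.
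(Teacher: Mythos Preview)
Your proposal is correct and follows exactly the approach the paper intends: the paper does not give a separate proof of Theorem~\ref{thm_summary} at all, but simply presents it as an immediate consequence of Theorems~\ref{thm_largest_pgi}--\ref{thm_largest_average} and Proposition~\ref{prop_characterization_nuc_1}, adding only the remark that for the first four indices one may even take $\varepsilon=0$ and $N(\varepsilon)=1$. Your write-up makes the implicit argument explicit, including the correct identification of the ``$p_i(v)=1$'' case for each index and the explicit computation $\tfrac{n+1}{2n}=\tfrac12+\tfrac1{2n}$, $\tfrac{n+3}{2(n+1)}=\tfrac12+\tfrac1{n+1}$ that yields the threshold for $\operatorname{AWI}$ and $\operatorname{ARI}$.
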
 
For the first four mentioned power indices we may even choose $\varepsilon=0$ and $N(\varepsilon)=1$. Such a {\lq\lq}hole{\rq\rq} in 
the space of possible power values does not occur for the power indices $\operatorname{SSI}$, $\operatorname{BZI}$, or $\operatorname{Js}$. 
If it is essential in an application to differentiate the influence of a {\lq\lq}large{\rq\rq} player in different simple games, then 
the power indices from Theorem~\ref{thm_summary} disqualify.   

A direct implication for the inverse power index problem is given by:
\begin{proposition}
  Let $p$ be a power index and $\sigma\in[0,1]^n$ with $\sum_{i=1}^n \sigma_i=1$ and $\alpha_p(n)\le \sigma_j\le 1$ for some 
  player $1\le j\le n$. Then, $\Vert \sigma-p(v)\Vert_1\ge \min\{1-\sigma_j,\sigma_j-\alpha_p(n)\}$ for ever simple game $v$ on $n$ players.
\end{proposition}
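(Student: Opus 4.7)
The plan is a very short case distinction on the value of $p_j(v)$, relying on the $\ell_1$ bound $\Vert \sigma-p(v)\Vert_1\ge |\sigma_j-p_j(v)|$ and on the fact, established by the preceding theorems, that $\alpha_p(n)$ is a ``hole'' cutoff: for every simple game $v$ on $n$ players, $p_j(v)$ is either equal to $1$ or bounded above by $\alpha_p(n)$.

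First I would record the one-coordinate lower bound $\Vert \sigma-p(v)\Vert_1\ge |\sigma_j-p_j(v)|$, which is immediate from the definition of the $\ell_1$ norm. Then I would split into two cases. In the first case $p_j(v)=1$; then $|\sigma_j-p_j(v)|=1-\sigma_j$, which is one of the two quantities appearing in the minimum. In the second case $p_j(v)<1$; by the relevant theorem (for $p\in\{\operatorname{PGI},\operatorname{DP},\operatorname{Nuc},\operatorname{MSRI},\operatorname{AWI},\operatorname{ARI}\}$ this is Theorems~\ref{thm_largest_pgi}, \ref{thm_largest_nucleolus}, \ref{thm_largest_msri}, \ref{thm_largest_average}, and, more generally, for every $p$ considered in the paper, the tight upper bound $\alpha_p(n)$) we get $p_j(v)\le \alpha_p(n)$. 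Combined with the hypothesis $\sigma_j\ge \alpha_p(n)$, this yields $|\sigma_j-p_j(v)|=\sigma_j-p_j(v)\ge \sigma_j-\alpha_p(n)$, the other quantity in the minimum.

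Taking the two cases together, $\Vert \sigma-p(v)\Vert_1\ge \min\{1-\sigma_j,\sigma_j-\alpha_p(n)\}$, which is the claim. There is no real obstacle: all the work sits in the previous theorems that define $\alpha_p(n)$ as a tight upper bound and thereby produce the ``gap'' in the attainable power values. One small thing to be careful about is to state the proposition with the implicit understanding that $\alpha_p(n)$ denotes the tight upper bound on $p_i(v)$ for non-dictatorial $i$ that was proved earlier in the paper; under that convention the case split above is exhaustive.
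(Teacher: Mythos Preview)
Your argument is correct and is exactly the intended one. The paper states this proposition without proof, treating it as a direct implication of the definition of $\alpha_p(n)$ as the tight upper bound on $p_i(v)$ subject to $p_i(v)<1$; your case split on whether $p_j(v)<1$ (hence $p_j(v)\le\alpha_p(n)$) or $p_j(v)=1$, combined with the trivial bound $\Vert\sigma-p(v)\Vert_1\ge|\sigma_j-p_j(v)|$, is precisely that implication made explicit.
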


For example, the desired power distribution $\sigma=(0.75,0.25,0,\dots,0)$ cannot be approximated with $\Vert\cdot\Vert_1$-distance 
strictly less than $\frac{1}{4}$ by the power distribution of a simple game according to one of the first four power indices 
from Theorem~\ref{thm_summary}. For similar results for the Penrose-Banzhaf index see \cite{pre05681536} and \cite{kurz2016inverse}.

Of course we may ask for upper bounds for $p_i(v)$, where $p_i(v)<\alpha_p(n)$, and 
so on. For the Shapley-Shubik index and $n\ge 3$ we conjecture that the next two upper bounds are given by 
$\frac{n-2}{n-1}=\frac{n-1}{n}-\frac{1}{n(n-1)}$ and $\frac{n^2-2n-1}{n(n-1)}=\frac{n-1}{n}-\frac{2}{n(n-1)}$. However, it seems that those 
values approach the same limit as $\alpha_p(n)$ in any case. In other words, for the power of the largest player the only gap, except 
for small values, that does not vanish when $n$ increases, is given by $(\alpha_p(n),1)$. Similar questions can be asked for the 
smallest player that is not a null-player or the second largest player and so on.


\begin{thebibliography}{9}
\expandafter\ifx\csname natexlab\endcsname\relax\def\natexlab#1{#1}\fi
\expandafter\ifx\csname url\endcsname\relax
  \def\url#1{\texttt{#1}}\fi
\expandafter\ifx\csname urlprefix\endcsname\relax\def\urlprefix{URL }\fi

\bibitem{pre05681536}
Alon, N., Edelman, P., 2010. The inverse {B}anzhaf problem. Social Choice and
  Welfare 34~(3), 371--377.

\bibitem{de2017inverse}
De, A., Diakonikolas, I., Servedio, R.~A., 2017. The inverse {S}hapley value
  problem. Games and Economic Behavior 105, 122--147.

\bibitem{dubey1975uniqueness}
Dubey, P., 1975. On the uniqueness of the {S}hapley value. International
  Journal of Game Theory 4~(3), 131--139.

\bibitem{dubey1979mathematical}
Dubey, P., Shapley, L., 1979. Mathematical properties of the {B}anzhaf power
  index. Mathematics of Operations Research 4~(2), 99--131.

\bibitem{koriyama2013optimal}
Koriyama, Y., Mac{\'e}, A., Treibich, R., Laslier, J.-F., 2013. Optimal
  apportionment. Journal of Political Economy 121~(3), 584--608.

\bibitem{kurz2012inverse}
Kurz, S., 2012. On the inverse power index problem. Optimization 61~(8),
  989--1011.

\bibitem{kurz2016inverse}
Kurz, S., 2016. The inverse problem for power distributions in committees.
  Social Choice and Welfare 47~(1), 65--88.

\bibitem{kurz2017democratic}
Kurz, S., Maaser, N., Napel, S., 2017. On the democratic weights of nations.
  Journal of Political Economy 125~(5), 1599--1634.

\bibitem{0191.49502}
Schmeidler, D., 1969. The nucleolus of a characteristic function game. SIAM
  Journal on Applied Mathematics 17, 1163--1170.

\end{thebibliography}

\end{document}